\newcommand{\todo}[1]{{\sffamily To do:}}
\newtheorem{theorem}{Theorem}
\newtheorem {lemma}{Lemma}
\newenvironment{proof}{{\flushleft \emph{Proof}:}}{\ding{110}}
\title{The Immediate Exchange model: an analytical investigation}
\author{Guy Katriel\\ Department of Mathematics, ORT Braude College,\\ Karmiel, Israel\\}
\date{}
\begin{document}

\maketitle

\begin{abstract} We study the Immediate Exchange model, recently introduced by Heinsalu and Patriarca [Eur. Phys. J. B 87: 170 (2014)], who showed by simulations that the wealth distribution in this model converges to a Gamma distribution with shape parameter $2$. Here we justify this conclusion analytically, in the infinite-population limit. An
	  infinite-population version of the model is derived, describing the evolution of the wealth distribution in terms of iterations of a nonlinear operator on the space of probability densities.  It is proved that the Gamma distributions with shape parameter $2$ are fixed points of this operator, and that, starting with an arbitrary 
	  wealth distribution, the process converges to one of these fixed points. We also discuss the mixed model introduced in the same paper, in which exchanges are either bidirectional or unidirectional with fixed probability. We prove that, although, as found by Heinsalu and Patriarca, the equilibrium distribution
	  can be closely fit by Gamma distributions,  the equilibrium distribution for this model is {\it{not}} a Gamma distribution.
\end{abstract}

\section{Introduction}

Kinetic exchange models, which describe exchanges of wealth among a population of agents, have been widely investigated in recent years (see reviews  \cite{ghosh,patriarca1,patriarca2,yakovenko}). 
The central question in these investigations is characterizing the equilibrium wealth distributions that emerge at the global level, given the
microscopic rules of interaction among the agents. 

In a recent paper \cite{heinsalu}, Heinsalu and Patriarca proposed a kinetic exchange model which they called the Immediate Exchange model. In this model pairs of agents
randomly interact, an interaction consisting of each of the agents transferring a random fraction of its wealth to the other agent, where these fractions are 
independent and uniformly distributed in $[0,1]$. Thus, if agents $i,j$ have wealths $x_i,x_j$ prior to the interaction, their wealths following the interaction are
\begin{equation}
\label{ie}x_i'=(1-\epsilon_i)x_i+\epsilon_j x_j,\;\;\; x_j'=(1-\epsilon_j)x_j+\epsilon_i x_i,
\end{equation}
where $\epsilon_i,\epsilon_j$ are independent and  $\epsilon_i,\epsilon_j\sim Uniform([0,1])$.
Based on simulation of this process, Heinsalu and Patriarca have concluded that the wealth distribution converges to a Gamma distribution with shape parameter $2$.
Here we rigorously justify this conclusion, by deriving an infinite population version of the Immediate Exchange model, in which the time-evolution of the wealth 
distribution is described by iteration of a nonlinear operator
on a space of probability distributions (Section \ref{model}), and showing that Gamma distributions with shape parameter $2$ are the fixed points of this operator
(Section \ref{equilibrium}). Furthermore, we prove that, starting from a general wealth distribution, iterations of the operator converge to one of these equilibrium distributions, determined by the mean wealth of the initial distribution, which is conserved (Section \ref{converge}). 

In Section \ref{mixed} we consider another model presented in \cite{heinsalu}, a mixed 
model in which interactions are either bidirectional as in the Immediate Exchange model, or unidirectional as in the Directed Market model \cite{martinez}, each case 
occuring with a certain fixed probability. This model is also studied in the infinite-population limit, and we prove that, despite the fact that the equilibrium distribution can be closely fitted by a Gamma distribution, as shown by numerical simulations in \cite{heinsalu}, the equilibrium distribution is in fact {\it{not}} a Gamma distribution.

\section{Infinite-population version of the Immediate Exchange model}\label{model}

We now formulate the infinite-population discrete-time version of the Immediate Exchange model in the framework of L\'opez, L\'opez-Ruiz and Calbet \cite{lopez1,lopez2}.
The distribution of wealth is described  by a probability density
$p_t(x)$ so that $p_t(x)dx$ is the fraction of the population whose wealth is in the interval $[x,x+dx]$ at time $t=0,1,2,...$.
It is assumed that at each time step (`day') all agents are randomly paired and exchange wealth according to the rule (\ref{ie}). 
Assuming the wealth distribution $p_t(x)$ before the interactions of day $t$ take place is given, we derive the 
probability density $p_{t+1}(x)$ following these interactions, and thus the time-evolution of the distribution of wealth.

The language of probability theory is convenient in deriving the evolution equation.
Let us choose a random agent and let $U$ be a random variable representing this agent's wealth before the interaction on day $t$ takes place, and $X$
its wealth following the interaction. Thus the distributions of $U$ and of $X$ are given by the probability densities $p_t(x)$ and $p_{t+1}(x)$, 
respectively. Let $V$ represent the wealth of the agent with which our focal agent interacted, which is a random variable whose distribution
is also $p_t(x)$. Then we have
\begin{equation}\label{ps}X=\epsilon_1 U+ \epsilon_2 V\end{equation}
where $\epsilon_1,\epsilon_2$ are independent of each other and of $U,V$, and uniformly distributed on $[0,1]$. 
The probability density $p_{t+1}(x)$ will thus be found by computing the distribution of $X$ given by (\ref{ps}).
We use the following simple result
\begin{lemma} Assume $W$ is a non-negative random variable with probability density $p(x)$, and $\epsilon$ is a random variable with $\epsilon \sim Uniform ([0,1])$, $W,\epsilon$ independent. Then the probability density of the product $\epsilon W$ is given by 
\begin{equation}\label{dS}S[p](x)= \int_x^\infty \frac{p(u)}{u} du.\end{equation}
\end{lemma}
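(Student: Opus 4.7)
The statement is an elementary distributional identity, so the plan is direct computation rather than any real structural argument. I would prove it by conditioning on $W$ and combining the conditional densities, which seems the most transparent route.

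First I would note that, conditional on $W=u$ with $u>0$, the random variable $\epsilon W$ equals $\epsilon u$, which is uniformly distributed on $[0,u]$ by a trivial scaling of the uniform distribution, so its conditional density is $\frac{1}{u}\mathbf{1}_{[0,u]}(x)$. By independence of $\epsilon$ and $W$, the unconditional density of $\epsilon W$ is obtained by integrating against $p(u)$:
\begin{equation*}
p_{\epsilon W}(x) \;=\; \int_0^\infty \frac{1}{u}\mathbf{1}_{[0,u]}(x)\, p(u)\,du \;=\; \int_x^\infty \frac{p(u)}{u}\,du,
\end{equation*}
since $\mathbf{1}_{[0,u]}(x)=1$ exactly when $u\geq x$ (using $x\geq 0$). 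This is precisely $S[p](x)$.

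As a sanity check, I would also sketch the alternative CDF route: write $P(\epsilon W\leq x)=\int_0^\infty P(\epsilon\leq x/u)\,p(u)\,du$, split at $u=x$ to get $\int_0^x p(u)\,du+\int_x^\infty \frac{x}{u}p(u)\,du$, and then differentiate in $x$. The $p(x)$ term from the first integral and the boundary term from the second cancel, leaving the claimed integral. Either derivation suffices, with the conditioning argument being shorter.

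There is no real obstacle here; the only mild care needed is the justification of differentiating under the integral sign (in the CDF approach) or of swapping conditional density and integration (in the conditioning approach), both of which are routine for non-negative $p$ by Fubini/Tonelli. I would therefore present the short conditioning proof as the main argument.
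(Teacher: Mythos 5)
Your proposal is correct, but your primary argument is a genuinely different route from the paper's. The paper works at the level of cumulative distribution functions: it writes $P(\epsilon W\leq x)$ as a double integral over $\epsilon$ and $u$, splits the $u$-range at $u=x$ to evaluate the inner $\epsilon$-integral, obtains $\int_0^x p(u)\,du + x\int_x^\infty \frac{p(u)}{u}\,du$, and then differentiates in $x$ (the $p(x)$ contributions cancel exactly as you note in your sanity check, which is in fact the paper's entire proof). Your main argument instead conditions on $W=u$, uses the scaling fact that $\epsilon u\sim \mathrm{Uniform}([0,u])$ to write down the conditional density $\frac{1}{u}\mathbf{1}_{[0,u]}(x)$ directly, and integrates against $p(u)$ to obtain the mixture density. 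The conditioning route is shorter and lands directly on the density without having to differentiate an integral with a variable limit; the paper's CDF route is marginally more self-contained in that it never invokes the notion of conditional density and only needs Tonelli plus the fundamental theorem of calculus. Both are complete and the computations agree; either would be acceptable here.
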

\begin{proof}
$$P(\epsilon W\leq x)= \int_0^1 \int_0^{\frac{x}{\epsilon}} p(u)du d\epsilon$$
$$= \int_0^x p(u)\int_0^{1}  d\epsilon du + \int_x^\infty p(u)\int_0^{\frac{x}{u}}  d\epsilon du$$
$$= \int_0^x p(u)du + x\int_x^\infty \frac{p(u)}{u} du$$
$$\Rightarrow\;\; \frac{d}{dx}P(\epsilon W\leq x)=\int_x^\infty \frac{p(u)}{u} du .$$
\end{proof}
Denoting the set of all probability densities on $[0,\infty)$ by ${\cal{P}}$ we can consider $S$, defined by (\ref{dS}), as an operator $S:{\cal{P}}\rightarrow {\cal{P}}$.
The above lemma implies that the density of both $\epsilon_1 U$ and $\epsilon_2 V$ is $S[p_t]$.
Therefore the density of $p_{t+1}$ of $X$, which by (\ref{ps})  is the density of the sum of two independent and identically distributed random variables $\epsilon_1 U,\epsilon_2 V$ is given by the convolution:
\begin{equation}\label{recg}p_{t+1}(x)=(S[p_t]*S[p_t])(x)=\int_0^x S[p_t](x-v)S[p_t](v)dv,\end{equation}
or more explicitly
\begin{eqnarray}
\label{explicit}p_{t+1}(x)&=&\int_0^x \Big(\int_{x-v}^\infty \frac{p_t(u)}{u} du \Big)\Big(\int_v^\infty \frac{p_t(u')}{u'} du'\Big)dv\\
&=&\int_0^x \int_y^\infty \int_{x-y}^\infty  \frac{p_t(u)}{u} \cdot  \frac{p_t(v)}{v} dv du  dy.\nonumber
\end{eqnarray}
In other words, defining the nonlinear operator $T:{\cal{P}}\rightarrow {\cal{P}} $ by
\begin{equation}\label{defT}T[p]\doteq S[p]*S[p],\end{equation}
we have that the evolution
of the wealth distribution for the Immediate Exchange model is given by
\begin{equation}\label{evol}p_{t+1}=T[p_t],\;\;\; t=0,1,2,...\end{equation}
This is the infinite-population formulation of the Immediate Exchange model.

\section{The equilibrium distribution}\label{equilibrium}

By (\ref{evol}) the equilibrium distributions are thus the solutions of the functional equation $T[p]=p$, that is
\begin{equation}\label{eqg}p=S[p]*S[p].\end{equation}
To solve this equation, we apply the Laplace transform
$${\cal{L}}[p](s)=\int_0^\infty e^{-sx}p(x)dx$$
to both sides of (\ref{eqg}), and set $\hat{p}={\cal{L}}[p]$, obtaining
$$\hat{p}(s)=\left({\cal{L}}[S[p]](s)\right)^2.$$
Noting that
\begin{equation}
\label{ls}{\cal{L}}[S[p]](s)= \frac{1}{s}\int_0^s\hat{p}(s')ds', 
\end{equation}
we conclude that the Laplace-transformed version of (\ref{eqg}) is
\begin{equation}\label{ltg}\hat{p}(s)=\Big(\frac{1}{s}\int_0^s\hat{p}(s')ds' \Big)^2.\end{equation}
To solve this equation, we set
$$g(s)=\sqrt{\hat{p}(s)}$$
and obtain that (\ref{ltg}) is equivalent to
$$g(s)=\frac{1}{s}\int_0^s(g(s'))^2ds'.$$
Multiplying both sides by $s$ and then differentiating, we obtain the differential equation
$$[sg(s)]'=g(s)^2,$$
that is
$$g'(s)=\frac{1}{s}g(s)[g(s)-1],$$
a separable equation which is solved to yield:
$$g(s)=\frac{1}{1+Cs},$$
hence
$$\hat{p}(s)=(g(s))^2=\frac{1}{(1+Cs)^2}.$$
The inverse Laplace transform now gives:
$$p(x)=\frac{1}{C^2}xe^{-\frac{x}{C}}.$$
Denoting by $w$ the mean wealth $w=\int_0^\infty xp(x)dx$, we have $C=\frac{w}{2}$, which yields
\begin{theorem}\label{tga} For each $w>0$, there exists a unique equilibrium distribution for the Immediate Exchange process satisfying $\int_0^\infty xp(x)dx=w$, given by
\begin{equation}
\label{eqd}p_w(x)=\frac{4}{w^2}xe^{-\frac{2}{w}x}.
\end{equation}
\end{theorem}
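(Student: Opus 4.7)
The plan is to attack the functional fixed-point equation $p=S[p]*S[p]$ by passing to Laplace transforms, which turns the convolution into a product and the averaging operator $S$ into a simple one-variable integral transform, reducing a nonlinear integral equation to a tractable ODE.

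First I would apply the Laplace transform $\mathcal{L}$ to both sides of the equilibrium equation, so that the convolution becomes a product:
\[
\hat{p}(s) = \bigl(\mathcal{L}[S[p]](s)\bigr)^{2}.
\]
To evaluate the right-hand side I would exchange the order of integration (Fubini, which is legal because the integrand is non-negative) in $\int_{0}^{\infty} e^{-sx}\int_{x}^{\infty} p(u)/u\,du\,dx$ to obtain the identity $\mathcal{L}[S[p]](s) = \tfrac{1}{s}\int_{0}^{s}\hat p(s')\,ds'$. This reduces the whole problem to a single scalar equation for $\hat p$.

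Next I would take a square root. Since $\hat p \geq 0$ on $[0,\infty)$ and is the square of a real continuous function by the above identity, setting $g(s)=\sqrt{\hat p(s)}$ is well defined and gives
\[
g(s) = \frac{1}{s}\int_{0}^{s} g(s')^{2}\,ds'.
\]
Multiplying by $s$ and differentiating converts this integral equation into the separable first-order ODE $[s\,g(s)]' = g(s)^{2}$, i.e.\ $g'(s) = \tfrac{1}{s} g(s)\bigl[g(s)-1\bigr]$. The natural condition at $s=0$ is $g(0)=1$, which follows from $\hat p(0)=\int_{0}^{\infty} p(x)\,dx = 1$.

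Separating variables and integrating, the one-parameter family of admissible solutions is $g(s)=1/(1+Cs)$, so that $\hat p(s) = 1/(1+Cs)^{2}$; inverting the Laplace transform (standard table entry) gives $p(x) = C^{-2}\, x\, e^{-x/C}$. Computing the first moment $\int_{0}^{\infty} x\,p(x)\,dx = 2C$ and imposing the mean-wealth constraint $= w$ forces $C = w/2$, yielding the claimed $p_w$. The main subtlety is justifying the passage from the integral equation for $g$ to the ODE and the boundary value $g(0)=1$ (which also secures uniqueness, since the remaining constant of integration is then fully pinned down by $w$); everything else is a routine computation.
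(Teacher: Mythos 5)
Your proposal is correct and follows essentially the same route as the paper: Laplace transform of the fixed-point equation, the identity $\mathcal{L}[S[p]](s)=\frac{1}{s}\int_0^s\hat p(s')\,ds'$, the substitution $g=\sqrt{\hat p}$ leading to the separable ODE $[sg]'=g^2$, and inversion of $\hat p(s)=(1+Cs)^{-2}$ with $C=w/2$ fixed by the mean. The extra care you take with the Fubini step and the boundary value $g(0)=1$ is a welcome addition but does not change the argument.
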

This is the Gamma distribution with shape parameter $2$, as found in the simulations of \cite{heinsalu}.

\section{Convergence to the equilibrium distribution}\label{converge}

To fully explain the simulation results in \cite{heinsalu}, we need to prove that the iterations (\ref{evol}) converge to an equilibrium distribution (\ref{eqd}), 
starting from an arbitrary initial probability density $p_0$. Since the process is wealth-preserving (see Lemma \ref{inv1} below), the value of $w$ will be determined by the mean wealth of the
initial density:
\begin{equation}\label{iw}w=\int_0^\infty xp_0(x)dx.\end{equation}
\begin{theorem}\label{convergence}
	Let $p_0(x)$ be a probability density on $[0,\infty)$ satisfying (\ref{iw}), and such that, for some $\alpha>1$, 
	\begin{equation}
	\label{fm}M_\alpha(p)=\int_0^\infty p(x)x^\alpha dx<\infty.
	\end{equation}
	Then the cumulative probability functions of the iterations (\ref{evol}) converge to that of $p_w(x)$ given by (\ref{eqd}), that is for all $x\geq 0$,
	$$\lim_{t\rightarrow \infty}\int_0^x p_t(u)du =\int_0^x p_w(u)du.$$
\end{theorem}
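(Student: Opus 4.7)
The plan is to reduce the claim to pointwise convergence of Laplace transforms and then establish the latter by showing that $T$ is a strict contraction in a Fourier/Laplace-type distance tailored to the conserved first moment. All computations will use the relation derived in Section \ref{equilibrium}:
\begin{equation*}
\widehat{p_{t+1}}(s)=g_{t+1}(s)^2,\qquad g_{t+1}(s)\doteq\frac{1}{s}\int_0^s\widehat{p_t}(s')\,ds',
\end{equation*}
together with the fact that $\widehat{p_w}(s)=G(s)^2$ with $G(s)=\frac{1}{1+ws/2}$ satisfying $G(s)=\frac{1}{s}\int_0^s\widehat{p_w}(s')\,ds'$.

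\textbf{Step 1 (reduction).} Since $p_w$ has a continuous cumulative distribution function, the continuity theorem for Laplace transforms of measures on $[0,\infty)$ reduces the desired CDF convergence to showing $\widehat{p_t}(s)\to\widehat{p_w}(s)$ for each fixed $s>0$. Without loss of generality assume $\alpha\in(1,2]$, since for $\alpha>2$ Jensen's inequality gives $M_2(p_0)\le M_\alpha(p_0)^{2/\alpha}<\infty$ so we may replace $\alpha$ by $2$.

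\textbf{Step 2 (a contraction distance).} For probability densities $p,q$ on $[0,\infty)$ set
\begin{equation*}
d_\alpha(p,q)=\sup_{s>0}\frac{|\widehat{p}(s)-\widehat{q}(s)|}{s^\alpha}.
\end{equation*}
Using the elementary bound $|e^{-a}-1+a|\le C_\alpha a^\alpha$ valid for $a\ge 0$ and $\alpha\in(1,2]$ (split $a\le 1$ and $a\ge 1$), and the fact that the mean $w$ is conserved by $T$ (Lemma~\ref{inv1}), we have for any density $p$ with mean $w$ and finite $M_\alpha$
\begin{equation*}
|\widehat{p}(s)-(1-ws)|\le C_\alpha s^\alpha M_\alpha(p),
\end{equation*}
and likewise for $p_w$, whose moments are all finite. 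Hence $d_\alpha(p_0,p_w)<\infty$.

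\textbf{Step 3 (contraction).} Since $\widehat{p_{t+1}}=g_{t+1}^2$ and $\widehat{p_w}=G^2$ with the same integral representation, we factor
\begin{equation*}
\widehat{p_{t+1}}(s)-\widehat{p_w}(s)=\bigl(g_{t+1}(s)-G(s)\bigr)\bigl(g_{t+1}(s)+G(s)\bigr).
\end{equation*}
The Laplace transforms of the probability densities $S[p_t]$ and $S[p_w]$ are $g_{t+1}$ and $G$, both bounded in $[0,1]$, so the second factor is at most $2$. The first factor satisfies
\begin{equation*}
|g_{t+1}(s)-G(s)|\le\frac{1}{s}\int_0^s|\widehat{p_t}(s')-\widehat{p_w}(s')|\,ds'\le\frac{d_\alpha(p_t,p_w)}{s}\int_0^s(s')^\alpha ds'=\frac{d_\alpha(p_t,p_w)}{\alpha+1}s^\alpha.
\end{equation*}
Therefore $d_\alpha(p_{t+1},p_w)\le\frac{2}{\alpha+1}d_\alpha(p_t,p_w)$, and since $\alpha>1$ the factor $\frac{2}{\alpha+1}<1$ gives $d_\alpha(p_t,p_w)\to 0$ geometrically. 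This forces pointwise Laplace transform convergence, which by Step 1 completes the proof.

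The only delicate step is Step~2: one must verify $d_\alpha(p_0,p_w)<\infty$ by controlling the Laplace transforms near $s=0$ using both the finite $\alpha$-moment and the matching first moment (the latter being essential to make the $s^\alpha$ ratio bounded). Everything else is a clean substitution into the explicit operator identity.
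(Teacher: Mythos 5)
Your proposal is correct and follows essentially the same route as the paper: the identical metric $d_\alpha(p,q)=\sup_{s>0}|\hat{p}(s)-\hat{q}(s)|/s^\alpha$, the identical difference-of-squares factorization, and the identical contraction constant $\frac{2}{\alpha+1}<1$ (Lemma~\ref{lip1}). The only difference is that you explicitly supply the two details the paper defers to \cite{katriel1,katriel2} --- the finiteness of $d_\alpha(p_0,p_w)$ via the bound $|\hat{p}(s)-1+ws|\leq C_\alpha s^\alpha M_\alpha(p)$, and the passage from Laplace-transform convergence to CDF convergence via the continuity theorem --- both of which are handled correctly.
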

Since the proof of Theorem \ref{convergence} follows the same technique as that used for analogous results for the Dr\u{a}gulescu - Yakovenko model and
the Directed Random Market model \cite{katriel1,katriel2}, we will be brief, and refer to those papers for details, indicating only the 
general argument and some points where calculations somewhat different from those in the above papers are required.

For $\alpha\geq 1$ and $w>0$, we define ${\cal{P}}_{\alpha,w}$ as the set of all probability densities satisfying (\ref{iw}) and (\ref{fm}).
We first show that the operator $T$ defined by (\ref{defT})  maps the space  ${\cal{P}}_{\alpha,w}$ into itself.
\begin{lemma}\label{inv1}  If $\alpha\geq 1$, $w>0$, and $p\in {\cal{P}}_{\alpha,w}$ then $T[p]\in {\cal{P}}_{\alpha,w}$.
\end{lemma}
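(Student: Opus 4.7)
The cleanest route is to exploit the probabilistic representation that already underlies the derivation of $T$. Recall that if $U,V$ are independent with density $p$ and $\epsilon_1,\epsilon_2$ are independent $\mathrm{Uniform}([0,1])$, independent of $U,V$, then $T[p]$ is the density of the non-negative random variable $X=\epsilon_1 U+\epsilon_2 V$. This makes each of the three requirements ($T[p]$ is a probability density supported on $[0,\infty)$, has mean $w$, and has finite $\alpha$-moment) into a statement about the random variable $X$, which is much more transparent than working directly with the convolution and the operator $S$.

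First I would dispose of the normalization: since $\epsilon_i U_i\geq 0$ and the Lemma above gives that $\epsilon_i U_i$ has density $S[p]$, the convolution $S[p]*S[p]$ is automatically a probability density on $[0,\infty)$. Next I would check conservation of mean wealth by linearity of expectation: $E[X]=E[\epsilon_1]E[U]+E[\epsilon_2]E[V]=\tfrac12 w+\tfrac12 w=w$, which yields $\int_0^\infty x T[p](x)\,dx=w$.

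The one substantive step is the finiteness of $M_\alpha(T[p])$. For this I would use the elementary convexity inequality $(a+b)^\alpha\leq 2^{\alpha-1}(a^\alpha+b^\alpha)$, valid for $a,b\geq 0$ and $\alpha\geq 1$, together with independence:
\begin{equation*}
M_\alpha(T[p])=E\bigl[(\epsilon_1 U+\epsilon_2 V)^\alpha\bigr]\leq 2^{\alpha-1}\bigl(E[\epsilon_1^\alpha]E[U^\alpha]+E[\epsilon_2^\alpha]E[V^\alpha]\bigr)=\frac{2^\alpha}{\alpha+1}M_\alpha(p),
\end{equation*}
which is finite by hypothesis. If one prefers an analytic derivation, the same bound comes out of inserting $S[p]*S[p]$ into $\int_0^\infty x^\alpha T[p](x)\,dx$, swapping the order of integration, and using Lemma~1 to identify $\int_0^\infty x^\alpha S[p](x)\,dx=\tfrac{1}{\alpha+1}M_\alpha(p)$.

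There is no real obstacle here; the only thing to be careful about is the choice of the convexity inequality (which is why the hypothesis $\alpha\geq 1$ matters), and to state everything in a form that makes invariance of $\mathcal{P}_{\alpha,w}$ under $T$ immediate so that it can be iterated in the proof of Theorem~\ref{convergence}.
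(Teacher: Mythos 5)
Your proof is correct, and it reaches exactly the paper's bound $M_\alpha(T[p])\leq \frac{2^\alpha}{\alpha+1}M_\alpha(p)$ via the same convexity inequality $(a+b)^\alpha\leq 2^{\alpha-1}(a^\alpha+b^\alpha)$, but by a genuinely different (and shorter) route. The paper works entirely at the level of the explicit integral representation (\ref{explicit}): it writes out $\int_0^\infty x^\alpha (S[p]*S[p])(x)\,dx$, performs several exchanges of the order of integration, and only then applies the convexity inequality inside the iterated integrals. You instead go back to the probabilistic representation $X=\epsilon_1 U+\epsilon_2 V$ from Section \ref{model}, which turns the whole computation into one line using independence and $E[\epsilon^\alpha]=\frac{1}{\alpha+1}$. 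The two treatments of mean conservation also differ in an interesting way: you get $M_1(T[p])=w$ directly from linearity of expectation, whereas the paper observes that its chain of inequalities becomes a chain of equalities when $\alpha=1$ (since $(x+v)^1=x+v$ exactly), so that $M_1(T[p])=M_1(p)$ drops out of the same computation. Your argument is more transparent and less error-prone; the paper's has the minor virtue of being self-contained at the level of the integral formulas it manipulates throughout (and of exhibiting the $\alpha=1$ case as the equality case of the general estimate). One small point worth making explicit if you write this up: the identification of $T[p]$ with the law of $\epsilon_1 U+\epsilon_2 V$ is exactly the content of (\ref{ps})--(\ref{recg}), so invoking it is legitimate, but you should cite it rather than treat it as free.
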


\begin{proof} Assume $p\in {\cal{P}}_{\alpha,w}$. Exchanging order of integration, and using the inequality  
	$(x+u)^\alpha\leq 2^{\alpha-1}(x^\alpha+u^\alpha)$, we have
	$$M_\alpha(p)=\int_0^\infty x^\alpha T[p](x)dx=\int_0^\infty x^\alpha \int_0^x S[p](x-v)S[p](v)dv dx$$ 
	$$= \int_0^\infty x^\alpha \int_0^x \Big(\int_{x-v}^\infty \frac{p(u)}{u} du \Big)\Big(\int_v^\infty \frac{p(u')}{u'} du'\Big)dv dx$$
	$$ = \int_0^\infty \Big(\int_v^\infty \frac{p(u')}{u'} du'\Big)\int_v^\infty x^\alpha  \Big(\int_{x-v}^\infty \frac{p(u)}{u} du \Big) dx dv$$
	$$ = \int_0^\infty \Big(\int_v^\infty \frac{p(u')}{u'} du'\Big)\int_0^\infty (x+v)^\alpha  \Big(\int_x^\infty \frac{p(u)}{u} du \Big) dx dv$$
	$$ \leq 2^{\alpha-1} \int_0^\infty \Big(\int_v^\infty \frac{p(u')}{u'} du'\Big)\int_0^\infty (x^\alpha+v^\alpha)  \Big(\int_x^\infty \frac{p(u)}{u} du \Big) dx dv$$
	$$ = 2^{\alpha-1} \int_0^\infty \Big(\int_v^\infty \frac{p(u')}{u'} du'\Big)\int_0^\infty  \frac{p(u)}{u}\int_0^u x^\alpha dx  du dv$$
	$$ + 2^{\alpha-1} \int_0^\infty v^\alpha\Big(\int_v^\infty \frac{p(u')}{u'} du'\Big)\int_0^\infty  \frac{p(u)}{u}\int_0^u  dx  du dv$$
	$$ = \frac{2^{\alpha-1}}{\alpha+1}M_{\alpha}(p) \int_0^\infty \int_v^\infty \frac{p(u')}{u'} du' dv+ 2^{\alpha-1} \int_0^\infty v^\alpha \int_v^\infty \frac{p(u')}{u'} du'dv$$
	$$ = \frac{2^{\alpha-1}}{\alpha+1}M_{\alpha}(p) + \frac{2^{\alpha-1}}{\alpha+1} M_{\alpha}(p)=\frac{2^{\alpha}}{\alpha+1} M_{\alpha}(p),$$
	so that $T[p]$ satisfies (\ref{fm}).
	
	Setting $\alpha=1$, the above inequality becomes an equality, and we obtain that $M_1(T[p])=M_1(p)$, so that $T[p]$ satisfies (\ref{iw}).
\end{proof}

We define the following metric on the set ${\cal{P}}_{\alpha,w}$, where we now assume $\alpha\in (1,2)$.
$$p,q\in {\cal{P}}_{\alpha,w}\;\;\;\Rightarrow\;\;\; d_{\alpha}(p,q)\doteq\sup_{s>0}\frac{|{\cal{L}}[p](s)-{\cal{L}}[q](s)|}{s^\alpha}.$$
The finiteness of $d_{\alpha,w}(p,q)$ is ensured whenever $1<\alpha<2$, see {\cite{katriel1}}, Lemma 2.3.

We use the following key estimate:

\begin{lemma}\label{lip1} If $1< \alpha<2$, $w>0$, $p,q\in {\cal{P}}_{\alpha,w}$, then
	$$d_{\alpha}(T[p],T[q])\leq \frac{2}{\alpha+1}\cdot  d_\alpha(p,q).$$
\end{lemma}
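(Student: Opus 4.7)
The plan is to work on the Laplace transform side, exploiting the identity ${\cal L}[T[p]](s) = \bigl({\cal L}[S[p]](s)\bigr)^2$ that follows from $T[p] = S[p]*S[p]$ together with the convolution property of the Laplace transform. Setting $F(s) = {\cal L}[S[p]](s)$ and $G(s) = {\cal L}[S[q]](s)$, the numerator appearing in $d_\alpha(T[p],T[q])$ factors as
$$F(s)^2 - G(s)^2 = (F(s) - G(s))(F(s) + G(s)),$$
so the task reduces to controlling each factor separately on the scale $s^\alpha$.

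For the difference factor, I would use the integral representation (\ref{ls}) to write
$$F(s) - G(s) = \frac{1}{s}\int_0^s (\hat{p}(s') - \hat{q}(s'))\,ds',$$
and insert the pointwise bound $|\hat{p}(s') - \hat{q}(s')| \leq d_\alpha(p,q)\,(s')^\alpha$, which is immediate from the definition of the metric. A direct integration then yields
$$|F(s) - G(s)| \leq \frac{d_\alpha(p,q)}{\alpha+1}\, s^\alpha,$$
which already accounts for the factor $\frac{1}{\alpha+1}$ in the target bound and produces the $s^\alpha$ that will cancel out when forming $d_\alpha(T[p],T[q])$. For the sum factor, the key observation is that $S[p]$ and $S[q]$ are themselves probability densities on $[0,\infty)$ (by the lemma establishing (\ref{dS}) as the density of $\epsilon W$ with $W \sim p$), so their Laplace transforms $F,G$ are nonnegative and bounded by $1$ on $[0,\infty)$, giving $|F(s) + G(s)| \leq 2$.

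Combining the two estimates and dividing by $s^\alpha$ produces
$$\frac{|{\cal L}[T[p]](s) - {\cal L}[T[q]](s)|}{s^\alpha} \leq \frac{2}{\alpha+1}\,d_\alpha(p,q)$$
uniformly in $s > 0$, and taking the supremum over $s$ yields the lemma. The argument is essentially algebraic once the factorization is in place, so I do not foresee a genuine obstacle; the only step deserving an explicit remark is that $S[p]$ is a probability density, which is what justifies the uniform bound $F(s) \leq 1$ used in the sum factor. Since $\frac{2}{\alpha+1} < 1$ for every $\alpha \in (1,2)$, this makes $T$ a strict contraction in the metric $d_\alpha$, which is presumably what will drive the convergence argument in Theorem \ref{convergence}.
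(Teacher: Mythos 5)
Your proposal is correct and follows essentially the same route as the paper: factor ${\cal L}[T[p]]-{\cal L}[T[q]]$ as a difference of squares, bound the sum factor by $2$ using that Laplace transforms of probability densities are at most $1$, and bound the difference factor by $\frac{d_\alpha(p,q)}{\alpha+1}s^\alpha$ via the defining inequality of the metric. The paper merely writes the averaging operator as $\int_0^1\hat{p}(su)\,du$ instead of $\frac{1}{s}\int_0^s\hat{p}(s')\,ds'$, which is the same computation after a change of variable.
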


\begin{proof}
	Recalling (\ref{ls}), we have 
	$${\cal{L}}[T[p]](s)=\left({\cal{L}}[S[p]]\right)^2=\Big(\frac{1}{s}\int_0^s \hat{p}(s')ds' \Big)^2=\Big(\int_0^1 \hat{p}(s u)du \Big)^2, $$
	hence, since $|\hat{p}(s)|,|\hat{q}(s)|\leq 1$, 
	$$\frac{|{\cal{L}}[T[p]](s)-{\cal{L}}[T[q]](s)|}{s^\alpha} =\frac{1}{s^\alpha}\Big|\Big(\int_0^1 [\hat{p}(s u)- \hat{q}(su)]du \Big)\Big(\int_0^1 [\hat{p}(su)+\hat{q}(s u)]du \Big)\Big| $$
	$$\leq 2\Big(\int_0^1 u^\alpha \frac{|\hat{p}(s u)- \hat{q}(su)|}{(su)^\alpha }du \Big)\leq 2 d_{\alpha}(p,q)\int_0^1 u^\alpha du=\frac{2}{\alpha+1}\cdot d_{\alpha}(p,q),$$
	and taking the supremum over $s>0$ we obtain the result.
\end{proof}

Since $\alpha>1$ implies $\frac{2}{\alpha+1}<1$ the above Lemma implies that $T$ is contracting with respect to the metric $d_\alpha$, and, by the argument given in \cite{katriel1,katriel2}, this implies that
 that the iterates $p_t$ converge to $p_w$ in the metric $d_\alpha$, and hence in the cumulative probability sense of Theorem \ref{convergence}, concluding the proof 
 of the theorem.
 
 \section{The mixed model}\label{mixed}
 
We now discuss another model proposed in \cite{heinsalu}, which is a `mixture' of the Immediate Exchange model discussed above with a model of unidirectional
wealth transfers. 

In the unidirectional model, when two agents interact, one agent is randomly assigned to be the `loser' and the other the `winner'. The loser gives
the winner a random fraction $\epsilon$ of its wealth. Thus if $j$ is the winner then
$$x_{i}'=(1-\epsilon)x_i,\;\;\;x_j'=x_j+\epsilon x_i,$$
where $\epsilon \sim Uniform([0,1]).$
This model has recently been studied by Mart\'inez-Mart\'inez and L\'opez-Ruiz \cite{martinez} who called it the Directed Random Market. They showed that in the
infinite population limit the evolution of the wealth distribution is given by 
\begin{equation}\label{itm}p_{t+1}=T_D[p_t],\end{equation}
where
\begin{equation}\label{defTD}T_D[p](x)=\frac{1}{2}\int_0^x p_t(x-u) \int_{u}^\infty  \frac{1}{v} p_t(v) dv  du+\frac{1}{2} \int_x^\infty \frac{1}{u} p_t(u)du.\end{equation}
In \cite{katriel2} it was shown that the corresponding equilibrium distribution is the Gamma distribution with shape parameter $\frac{1}{2}$:
$$p_w(x)=\frac{1}{\sqrt{2w \pi x}}e^{-\frac{x}{2w}},$$
and convergence of the iterations (\ref{itm}) to the equilibrium distribution was proved.

The mixed model proposed in \cite{heinsalu} combines the Immediate Exchange model and the Directed Random Market model as follows: for a fixed parameter $\mu\in [0,1]$,
when two agents interact, with probability $\mu$ a unidirectional money transfer (as in the Directed Random Market model) is carried out,
and with probability $1-\mu$ a bidirectional exchange (as in the Immediate Exchange model) is carried out. 

In \cite{heinsalu} the mixed model  was investigated by simulations, and it was observed that the resulting wealth distribution is very will
fitted by a Gamma distribution with shape parameter $\alpha=2^{1-2\mu}$.  However the authors did note some deviations from the Gamma distribution.
In the extreme cases $\mu=0,\mu=1$, where the model reduces to the Immediate Exchange and to the Directed Random Market models, respectively, 
we indeed have the equilbirium distributions with shape parmeter $\alpha=2^{1-2\mu}$, as proved above and in \cite{katriel2}. However, as we will show
below, for $\mu\in (0,1)$ the equilibrium distribution is {\it{not}} a Gamma distribution.

The evolution of the wealth distribution for the mixed model will be given by $p_{t+1}=T_M[p_t]$, where
$$T_M[p]\doteq \mu T_D[p]+(1-\mu)T[p],$$
where $T$ is defined by (\ref{defT}) and $T_D$ by (\ref{defTD}). To find the equilibrium distributions we need to solve
$T_M[p]=p$, that is
\begin{equation}
\label{fpm}p=\mu T_D[p]+(1-\mu)T[p].
\end{equation}
In \cite{katriel2} it was shown that, setting $\hat{p}(s)={\cal{L}}[p](s)$, we have
$${\cal{L}}[T_D[p]](s)=\frac{1}{2s}\cdot[\hat{p}(s)+1 ]\cdot\int_0^s \hat{p}(s')ds'$$
and in Section \ref{equilibrium} we showed that
$${\cal{L}}[T[p]](s)=\left(\frac{1}{s}\cdot\int_0^s \hat{p}(s')ds'\right)^2,$$
hence applying the Laplace transform to both sides of (\ref{fpm}) gives
\begin{equation}\label{te}\hat{p}(s)=\mu \cdot\frac{1}{2s}\cdot[\hat{p}(s)+1 ]\cdot\int_0^s \hat{p}(s')ds'+(1-\mu)\left(\frac{1}{s}\cdot\int_0^s \hat{p}(s')ds'\right)^2.\end{equation}
To solve the functional equation (\ref{te}),  we define 
$$h(s)=\frac{1}{s}\cdot\int_0^s \hat{p}(s')ds', $$
so that 
\begin{equation}\label{ph}\hat{p}(s)=[sh(s)]'=sh'(s)+h(s),\end{equation}
and (\ref{te}) becomes
$$sh'(s)+h(s)=\mu \cdot\frac{1}{2}\cdot [sh'(s)+h(s)+1 ]\cdot h(s)+(1-\mu)(h(s))^2,$$
or, after rearrangement,
\begin{equation}\label{dh}h'(s)=\frac{2-\mu}{s}\cdot \frac{\left[ h(s)-1\right]h(s)}{2-\mu h(s)}.\end{equation}
This separable differential equation can be solved, but only in implicit form:
\begin{equation}\label{si}(1-h(s))^{2-\mu}=Cs^{2-\mu}\cdot (h(s))^2.\end{equation}
(\ref{si}) and (\ref{ph}) define $\hat{p}(s)$, from which the equilibrium densities $p(x)$ are obtained by Laplace inversion.
However, except in the cases $\mu=0,1$, one cannot solve (\ref{si}) for $h(s)$ in a reasonably explicit form.

To verify that the equilibrium distribution is not a Gamma distribution when $\mu\neq 0,1$, we show that the moments of the equilibrium distribution
cannot be equal to those of a Gamma distribution. The same idea was used in \cite{lallouache} with regard to a kinetic exchange model including a saving propensity. 
We compute the moments of integer order of the equilibrium distribution $p$, $M_k(p)=\int_0^\infty p(x)x^k dx$.
By (\ref{ph}) we have 
\begin{equation}\label{mk}M_k(p)=(-1)^k\hat{p}^{(k)}(0)=(-1)^k(k+1)\cdot h^{(k)}(0).\end{equation}
Thus $h(0)=1$, $h'(0)=-\frac{1}{2}M_1(p)=-\frac{w}{2}$. By successively differentiating (\ref{dh}) 
and sending $s\rightarrow 0$, we recursively compute the derivatives $h^{(k)}(0)$, $2\leq k\leq 4$. 
By (\ref{mk}) these computations give
$$M_1(p)=w,\;\;M_2(p)=\frac{3}{2-\mu}\cdot w^{2},$$
$$M_3(p)=3\cdot \frac{4+\mu}{(2-\mu)^2}\cdot w^3,\;\;M_4(p)=5\cdot\frac{\mu^2+8\mu+12}{(2-\mu)^3}\cdot w^4.$$
Denoting by $q(x)=\frac{1}{\beta^\alpha \Gamma(\alpha)}x^{\alpha-1}e^{-\frac{x}{\beta}}$ the density of a Gamma distribution, 
its moments are given by
\begin{equation}\label{mo}M_k(q)=\beta^k \prod_{j=0}^{k-1}(\alpha+j).\nonumber\end{equation}
If we wish that $M_1(q)=M_1(p)$, $M_2(q)=M_2(p)$ we need to take
$\alpha=\frac{2-\mu}{1+\mu},\;\; \beta = \frac{1+\mu}{2-\mu}\cdot w$.	
This gives
$$M_3(q)=3\cdot \frac{4+\mu}{(2-\mu)^2}\cdot w^3,\;\;M_4(q)=3\cdot\frac{2\mu^2+13\mu+20}{(2-\mu)^3}\cdot w^4.$$
While we have $M_k(p)=M_k(q)$, $1\leq k\leq 3$ (for the first two this is 
true by design, while for the third moment it is an interesting `coincidence'), the fourth moment already differs (unless $\mu=0,1$), proving that 
the equilibrium distribution is {\it{not}} a Gamma distribution.

Let us note that the Gamma distribution we fitted above by equating the first two moments gave us shape parameter $\alpha=\frac{2-\mu}{1+\mu}$, while in 
\cite{heinsalu} the fit $\alpha=2^{1-2\mu}$ was given. If one looks at these two expressions in the range $\mu\in [0,1]$, one sees they have very close values. 
Of course neither of these expressions yields the true equilibrium distribution for the mixed model, since, as shown above, this equilibrium distribution is not
Gamma.

\end{document}